\documentclass[10pt,twocolumn,english,conference]{IEEEtran}
\usepackage{mathptmx} % Times New Roman font
\usepackage{cite}
\usepackage{multicol}
\usepackage{amsmath}
\usepackage{subeqnarray}
\usepackage{cases}
\usepackage{framed}
\usepackage{amsfonts}
\usepackage{amssymb}
\usepackage{bm}
\usepackage{graphicx}
\usepackage{epstopdf}
\usepackage{import}
\usepackage{subfig}
\usepackage[english]{babel}
\usepackage{color}%
\usepackage{algorithm}
\usepackage{url}
\usepackage[belowskip=-6pt,aboveskip=0pt]{caption}
\setlength{\textfloatsep}{8pt plus 2pt minus 2pt}
\usepackage{amsthm}
\usepackage{amsbsy}

\newtheorem*{thm*}{Theorem}

\theoremstyle{remark}

\theoremstyle{definition}

\newtheorem*{definition*}{Definition}

\newtheorem*{ex*}{Example}

\setcounter{MaxMatrixCols}{30}
%TCIDATA{OutputFilter=latex2.dll}
%TCIDATA{Version=5.50.0.2953}
%TCIDATA{LastRevised=Sunday, March 14, 2010 16:33:46}

%TCIDATA{<META NAME="GraphicsSave" CONTENT="32">}
%TCIDATA{<META NAME="SaveForMode" CONTENT="1">}
%TCIDATA{BibliographyScheme=Manual}
%TCIDATA{Language=American English}
%BeginMSIPreambleData
\providecommand{\U}[1]{\protect\rule{.1in}{.1in}}
%EndMSIPreambleData

\newcommand{\E}{\mathbb{E}}

\newcommand{\xb}{\mathbf{x}}
\newcommand{\yb}{\mathbf{y}}

\newcommand{\Rt}{\mathbb{R}^2}

%% Manually adjust margins %%%%%%%%%%%%%%%%%%%%%%%%%%
\usepackage[
left=1.02in,
right=1.02in,
top=0.72in,
bottom=1.04in
%left=2.5cm,
%right=2.5cm,
%top=2.5cm,
%bottom=2.5cm
%  paperwidth=17cm, % width of a page
%  paperheight=24cm, % height of a page
%  outer=2.5cm, % right margin on an odd page
%  inner=2.5cm, % left margin on an odd page
%  top=2.5cm, % top margin
%  bottom=2.5cm % bottom margin
 ]{geometry}
	% \addtolength{\oddsidemargin}{-2.5cm}
	% \addtolength{\evensidemargin}{-2.5cm}
	% \addtolength{\textwidth}{2.5cm}

	% \addtolength{\topmargin}{-2.5cm}
	% \addtolength{\textheight}{2.5cm}
%%%%%%%%%%%%%%%%%%%%%%%%%%%%%%%%%%%%%%%%%%%%%%%%%%%%%	

\hyphenation{op-tical net-works semi-conduc-tor tech-no-lo-gy par-ti-cu-lar}

\IEEEoverridecommandlockouts \IEEEpubid{\makebox[\columnwidth]{ 978-1-5386-3531-5/17/\$31.00~\copyright~2017 IEEE \hfill} \hspace{\columnsep}\makebox[\columnwidth]{ }}

\begin{document}

%\title{\LARGE Modeling Regular and Event-driven Machine-Type Traffic using Spatial Point Processes}
\title{\LARGE A Traffic Model for Machine-Type Communications Using Spatial Point Processes}

%\author{Henning Thomsen 
%\thanks{%Part of this work has been performed in the framework of the FP7 project ICT-317669 METIS, which is partly funded by the European Union.
%The authors would like to acknowledge the contributions of their colleagues in METIS, although the views expressed are those of the authors and do not necessarily represent the project.
%Part of this work has been supported by the Danish High Technology Foundation via the Virtuoso project.}
%\thanks{H. Thomsen is with the Department of Electronic Systems, Aalborg University, Denmark
%(email: ht@es.aau.dk).}
%}

% conference
\author{\IEEEauthorblockN{Henning Thomsen, Carles Navarro Manch\'on, Bernard Henri Fleury}
\IEEEauthorblockA{Department of Electronic Systems, Aalborg University, Denmark\\
Email: \{ht, cnm, fleury\}@es.aau.dk} }

\maketitle

%\vspace{-20mm}

\begin{abstract}
A source traffic model for machine-to-machine communications is presented in this paper. We consider a model in which devices operate in a regular mode until they are triggered into an alarm mode by an alarm event. The positions of devices and events are modeled by means of Poisson point processes, where the generated traffic by a given device depends on its position and event positions. We first consider the case where devices and events are static and devices generate traffic according to a Bernoulli process, where we derive the total rate from the devices at the base station. We then extend the model by defining a two-state Markov chain for each device, which allows for devices to stay in alarm mode for a geometrically distributed holding time. The temporal characteristics of this model are analyzed via the autocovariance function, where the effect of event density and mean holding time are shown.
\end{abstract}

% Note that keywords are not normally used for peerreview papers.
%
%\begin{IEEEkeywords}
%Machine-type communications, Internet-of-things, Spatial traffic model, Poisson point process
%\end{IEEEkeywords}

%\vspace{-5mm}

%\section{Introduction}\label{sec:Introduction}

\section{Introduction}\label{sec:Introduction}
Machine-type communications (MTC), also called machine-to-machine (M2M) communications, is a type of transmission which does not directly involve human intervention. Examples include smart metering, telemetry and sensors~\cite{shafiq2013large}. This type of transmission is projected to be an important part of fifth generation (5G) communication systems~\cite{boccardi2014five}.

It is expected that the number of Machine-type Devices (MTDs) will be at least two orders of magnitude higher than the number of human-type users in a given cell~\cite{yang2014m2m}. Also, MTC traffic from several devices can be correlated in time and space, since e.g. sensors can react to external events such as a fire or traffic accident, which have a spatial characteristic. This sort of behaviour can result in traffic spikes or bursts which can put stress on the receiver (e.g. a base station (BS)) since MTC traffic is usually uplink-dominated. It is therefore desirable to have tractable models that can be used to assess the impact of this type of traffic on the BS.

In this paper, we formulate a traffic model based on tools from stochastic geometry that takes into account the spatial aspect of devices and events. Motivated by the observation that traffic from individual MTDs is often small, while the total traffic at the BS can be large~\cite{shafiq2013large}, we derive analytical expressions for the expected total rate of the devices. It is assumed that each device can be in two modes, regular and alarm, depending on the proximity of the device location to the event locations. We also derive a Markov chain model and provide an analytical expression which is a good approximation for the expected total rate when the time spent in the alarm state is low.%We also derive a Markov chain model, and show that the analytical expression is a good approximation for the total rate when the time spent in the alarm state is low.
%
%This paper is organized as follows. In Sec.~\ref{sec:BackgroundandRelatedWorks}, we review the background and existing literature on traffic modeling for M2M, and provide comments on these. This is followed by a description of the proposed model in Sec. ~\ref{sec:ProposedModel}, which assumes Bernoulli generated traffic at each MTD. The total rate for the model is derived in Sec.~\ref{sec:ExpectedAggegrateRateDerivation}. In Sec.~\ref{sec:MarkovchainAnalysis}, we define a Markov chain model, in which state transitions depend on device and event locations as well as a parameter $q$, controlling the average time spent in the alarm state. Simulation results are shown in Sec.~\ref{sec:NumericalResults}, and the paper is concluded in Sec.~\ref{sec:Conclusion}.

\section{Background and Related Works}\label{sec:BackgroundandRelatedWorks}
% Source and total traffic models
Models for M2M traffic can broadly be categorized into two groups, source models and aggregated models~\cite{centenaro2015study}. In a source model, each source has its own set of tunable parameters which control its individual traffic characteristics. In an aggregated traffic model all sources have a common set of parameters. A source model thus allows for more fine-grained control; however this comes at a higher cost in computational complexity because of a higher number of parameters.

% 3GPP Model 1 and 2
Traffic modeling for MTC has been considered by the 3rd Generation Partnership Project (3GPP) in~\cite{3GPPTR37868}. They propose two models, one accounting for uncoordinated arrivals and one accounting for coordinated arrivals. In the former, the packet arrivals are specified to follow a uniform distribution in $[0,T_u]$, where $T_u$ is $60$ sec. The latter model uses a scaled Beta distribution over $[0,T_b]$, where $T_b$ is $10$ sec, with parameters $\alpha=3$ and $\beta=4$. In~\cite{madueno2016reliable}, other values of $\alpha$ and $\beta$ are derived numerically for different types of alarm events.% These two parameters are based on experimental results~\cite{3GPPGERAN47}. In~\cite{madueno2016reliable}, other values of $\alpha$ and $\beta$ are derived numerically, which depend on how the alarm event is defined.

% PP, MMPP, CMMPP, CMAP.
Modeling of human-type traffic (such as telephone traffic) is usually done by means of Poisson processes (PPs). However, PPs are not well suited for modeling coordinated traffic, which is expected to play a large role in MTC applications. To address this, many papers consider Markov modulated Poisson processes (MMPPs). In such a model, a source can be in a number of states $\mathcal{S} = \{s_1, \ldots, s_m\}$ and the state transitions are governed by an irreducible Markov chain. When the process is in state $s_i \in \mathcal{S}$, it generates Poisson-distributed traffic with rate $\lambda_i$. A variant of this is the Markov modulated Bernoulli process (MMBP), where Bernoulli-distributed traffic with state-dependent probability $p_i$ is generated~\cite{ibe2013markov}. Another model used especially in Internet traffic modeling is the Markovian arrival process (MAP). A MAP can be in a number of states $s_i \in \mathcal{S}$, also termed \emph{phases}~\cite{ibe2013markov}. The process can then either enter another phase with rate $\gamma$, or, with rate $\lambda$, enter another phase and generate a packet. Contrary to the MMBP and MMPP, where the interarrival times of packets are geometrically and exponentially distributed respectively, the MAP allows for more general interarrival time distributions.

% Two-matrix model P_C and P_U, with two states
To take into account possible dependency between MTDs,~\cite{laner2013traffic} considers a model where each device can be in regular or alarm state. Two state-transition matrices $\mathbf{P}_{\mathrm{U}}$ and $\mathbf{P}_{\mathrm{C}}$ are defined, the former representing uncoordinated behaviour and the other, coordinated behaviour. The model is a source model where each device has its own transition matrix which is a convex combination of $\mathbf{P}_{\mathrm{U}}$ and $\mathbf{P}_{\mathrm{C}}$. The coordination between devices is modeled via a parameter $\delta_i$ for each device $i$, as well as a global background process $\theta(t)$, which takes values uniformly in $[0,1]$. It is shown that the model has a lower computational complexity compared to a full source model. This model is extended in~\cite{centenaro2015study} and used for evaluating a random access scheme supporting human-type and MTD traffic.
%The above model is extended to three states in~\cite{centenaro2015study}, the third being an `off' state. Here, the authors also evaluate the model in a random access scheme, considering both human-type and machine-type traffic, and show the total arrival rate at the BS as a function of time.

% Spatio-temporal queueing model referred to by the reviewers (ref [D] and [E])
A spatio-temporal traffic model for MTDs using queueing theory and stochastic geometry is analyzed in~\cite{gharbieh2017spatiotemporal}. They compare three different access protocols for MTC and derive success probabilities, average queueing delays and waiting times. Stability regions for the three protocols are given, which can be used for scalability assessment. Compared to this work, our model allows for devices to be triggered by external events.

% Two-matrix model with theta(t) having a physical meaning
Reference~\cite{madueno2016reliable} formulates a spatial correlation model of MTDs using the IEEE802.11ah protocol. As in other works, each device can be in two states, regular or alarm, with transition probabilities described by the model in~\cite{laner2013traffic}. In~\cite{madueno2016reliable}, they define the background process $\theta(t)$ to be a function of alarm onset and velocity, as well as location. They give a numerical expression for the total rate, albeit not in closed form.

% CMAP model for car communications
A two-state traffic model for automotive M2M applications based on a coupled MAP is proposed in~\cite{grigoreva2017coupled}. Here, any car can trigger an event; then other cars go into event mode if the distance to the car triggering the event is less than a specified threshold. They show that this model better captures burstiness than a PP model
%, where each car can generate either periodic update (PE) or event driven (ED) traffic, based on the distance to an event and correlation with other cars. In this work, any car can trigger an event, and other cars then go into ED mode if the distance to the event-triggering car is less than a predefined threshold. It is shown that this model better captures burstiness than a PP model.% The Expectation-Maximization (EM) algorithm is used to learn the parameters of the CMAP model from empirical data. It is shown that the CMAP model better captures burstiness than a PP model.

%The paper~\cite{grigoreva2017coupled} gives a traffic mode for automotive M2M applications based on a coupled Markovian arrival process (CMAP), where each car can generate either periodic update (PE) or event driven (ED) traffic, based on the distance to an event and correlation with other cars. In this work, any car  can trigger an event, and other cars then go into ED mode if the distance to the event-triggering car is less than a predefined threshold. It is shown that this model better captures burstiness than a PP model.% The Expectation-Maximization (EM) algorithm is used to learn the parameters of the CMAP model from empirical data. It is shown that the CMAP model better captures burstiness than a PP model.

% Our contribution in relation to previous work
In the present paper, we use the theory of spatial Poisson point processes (PPPs) to model the position of devices and event epicenters. We also define a function which models the influence of events on device traffic. Even though spatial modeling has been considered in~\cite{madueno2016reliable, grigoreva2017coupled}, our use of PPPs allows us to obtain tractable analytical expressions for the total rate, and thus avoids computationally expensive simulations. Also, we include the temporal aspect in the model via defining a Markov chain for each device, derive an approximation of the expected total rate, and show via numerical simulations that this approximation is quite close. We study the temporal correlation of the traffic in the Markov chain model via the autocovariance function. Even though our model is a source model, the theory of PPPs can still be applied and provides accurate results.
%Furthermore, even though our model is a source model (which will be made clear in subsequent sections), the theory of PPPs can still be applied and provides accurate results.

\section{Proposed Model}\label{sec:ProposedModel}
%We assume $N$ MTDs located in the coverage disk of radius $R$ of a BS (located at the origin of the disk). 
Let $C$ be the coverage disk of a BS (located at the origin of the disk) of radius $R$ and let $N$ devices be deployed randomly and independently in $C$.
Each MTD can be in one of two states: regular (R) and alarm (A). Time is slotted. In timeslot $k$, MTD $i$ generates traffic with rate $R_i(k)$, depending on its current state. In the regular state, it generates traffic with rate $R_{\mathrm{R},i}$, and in alarm state, the rate is $R_{\mathrm{A},i}$. In this work, all devices are of the same type, i.e. $R_{\mathrm{R},i} = R_{\mathrm{R}}$ and $R_{\mathrm{A},i} = R_{\mathrm{A}}$, $\forall i=1,\ldots,N$.

\subsection{Modeling the Device Locations and Event Epicenters}\label{sub:DeviceModeling}
The MTDs are deployed in the coverage disk of the BS according to a homogeneous PPP $\Phi_{\mathrm{M}}$ with density $\lambda_{\mathrm{M}}$. The event epicenters are represented by a homogeneous PPP $\Phi_{\mathrm{E}}$ with density $\lambda_{\mathrm{E}}$ in the Euclidean plane. The processes $\Phi_{\mathrm{M}}$ and $\Phi_{\mathrm{E}}$ are assumed independent.

We choose to use PPPs because typical nodes can be reasonably assumed to be randomly deployed in the plane, in particular since we are not targeting a specific application. Furthermore, this choice allows for analytical tractability and can serve as a good indicator of performance.

Fig.~\ref{fig:ScenarioEventDevicesBaseStation} shows an example of a deployment of devices in a cell. Event epicenters are indicated with red crosses. Note that events from outside the BS coverage area can influence devices inside it. In the figure, three devices are in alarm mode (transmission indicated with solid red arrows) and two devices are in regular mode (transmission indicated with dashed blue arrows).

\begin{figure}[t]
	\centering
		\includegraphics[width=0.7\linewidth]{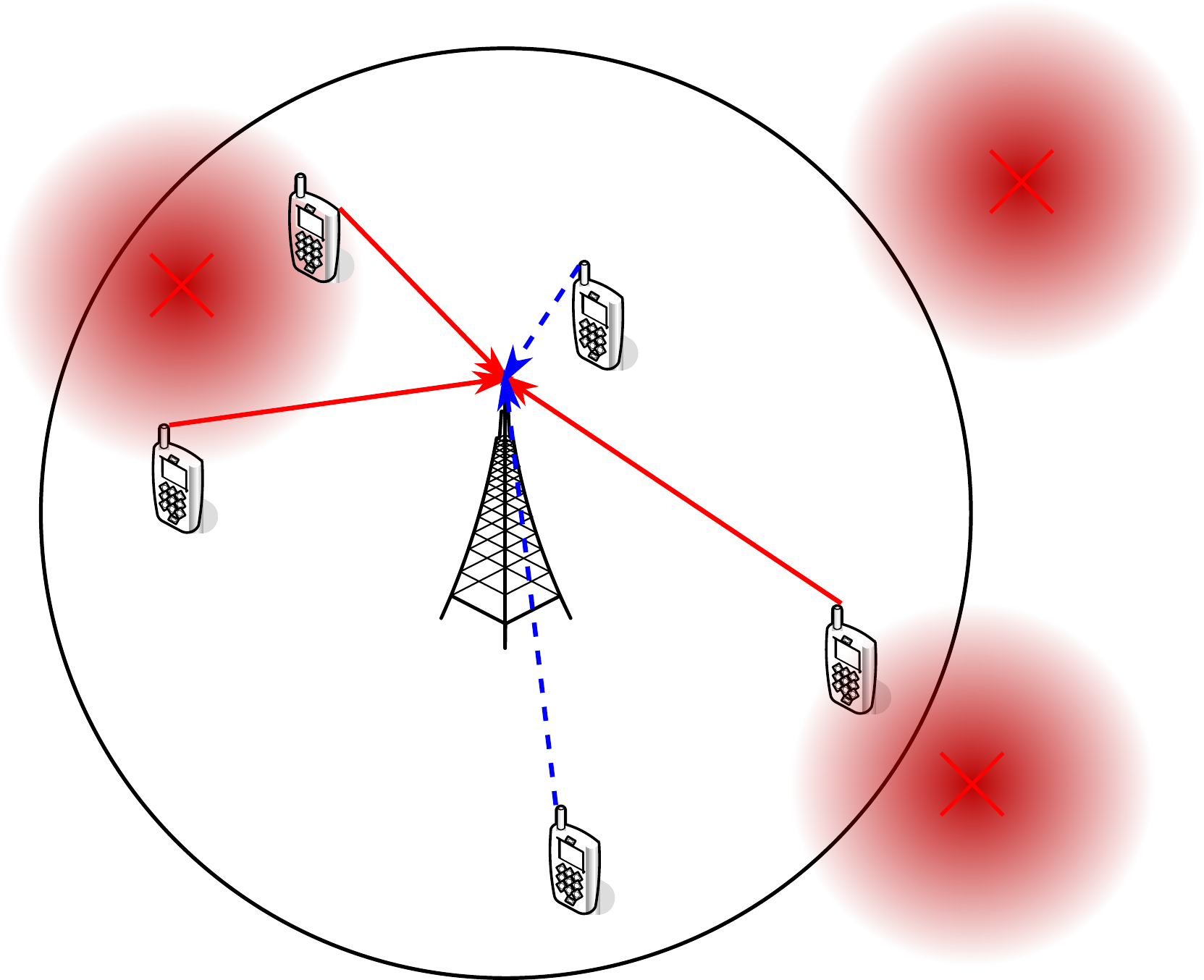}
	\caption{Conceptual deployment of devices in a cell, with a BS at the center. Event epicenters are indicated by red crosses. The influence of an event is coded in the shade intensity: The darker the shade, the stronger the influence.}
	\label{fig:ScenarioEventDevicesBaseStation}
\end{figure}
% The influence of an event is high for darker shades of red, and lower for lighter shades.

\subsection{Modeling Alarm Probabilities}\label{sub:ModelingAlarmProbabilities}

In order to capture the effect of a given event on a device, we define a function with gives the probability $p$ that an alarm is triggered in a device at location $\xb$ by an event with epicenter at location $\yb$ in the Euclidean plane $\Rt$. This function depends on the distance between the two locations. The definition is given below.
\begin{definition*}[Alarm Triggering Probability Function (ATPF)]\label{def:ECF}
%Let $\Phi_\mathrm{E}$ and $\Phi_\mathrm{M}$ be independent PPPs and $\varphi_\mathrm{E}$ and $\varphi_\mathrm{M}$ two realizations of them. 
%Let $\xb$ and $\yb$ be two points in the Euclidean plane and let $d = \Vert \xb - \yb \Vert$ be the distance between them. 
An \emph{ATPF} is a function $f$ defined as
\begin{equation}\label{eqn:ATPFdefinition}
	f : [0,\infty) \to [0,1], \hspace{1cm} d \mapsto f(d) = p,
\end{equation}
having a finite first moment.% and satisfying% the following conditions:
%\begin{equation}\label{eqn:ATPFconditions}
%	f(0) = 1, \hspace{1cm} \lim_{d \to \infty}f(d) = 0.
%\end{equation}
\end{definition*}
In applications, $f(d)$ is typically non-increasing to represent a decaying influence of events on devices as the distance $d$ increases.

\subsection{Modeling Discrete Rates}\label{sub:ModelingDiscreteRates}
%As is customary in traffic modeling of MTDs (see the references in Sec.~\ref{sec:BackgroundandRelatedWorks}), we define two rates, $R_{\mathrm{A}}$ and $R_{\mathrm{R}}$, as the rates in alarm and regular mode respectively.
%For a device at location $\xb$, let $E_{\xb\yb}$ be the event that an alarm event with epicenter at $\yb$ causes this device to go to alarm mode, 
Let $E_{\xb\yb}$ be the event that a device at location $\xb$ is triggered into alarm mode by an event with epicenter at $\yb$.
%For a device at location $\xb$ and event with epicenter at $\yb$, let $$ be the event that
Let $\overline{E_{\xb\yb}}$ be the complement of $E_{\xb\yb}$ and let $p_{\xb\yb} = \Pr\{ E_{\xb\yb}\}$ be the probability of $E_{\xb\yb}$. The probability of this device being in alarm mode is% then
\begin{align}
p_\xb &= \Pr\{ \text{Device at $\xb$ is  triggered by at least one event} \}  \nonumber \\
&= 1 - \Pr\{ \text{No event triggers device at $\xb$} \} \\
&= 1 - \Pr \left\{ \bigcap_{\xb \in \Phi_{\mathrm{E}}} \overline{E_{\xb\yb}} \right\} = 1 - \prod_{\yb \in \Phi_{\mathrm{E}}} \Pr \left\{ \overline{E_{\xb\yb}} \right\} \label{eqn:probAlarmDerivation} \\
&= 1 - \prod_{\yb \in \Phi_{\mathrm{E}}} (1 - p_{\xb\yb}) \label{eqn:FinalExpressionPi},
\end{align}
%p_i &= \Pr\{ \text{Device $i$ is  triggered by at least one event} \}  \nonumber \\
%&= 1 - \Pr\{ \text{No event triggers device $i$} \} \\
%&= 1 - \Pr \left\{ \bigcap_{j\in \varphi_{\mathrm{E}}} \overline{E_{ij}} \right\} \\
%&= 1 - \prod_{j\in \varphi_{\mathrm{E}}} \Pr \left\{ \overline{E_{ij}} \right\} \label{eqn:probAlarmDerivation} \\
%&= 1 - \prod_{j\in \varphi_{\mathrm{E}}} ( 1 - \Pr \left\{ E_{ij} \right\} ) \label{eqn:FinalExpressionPi},
%where in~\eqref{eqn:probAlarmDerivation}, the independent assumption of events was used.
where in~\eqref{eqn:probAlarmDerivation}, we used the assumption that alarms resulting from different events are independent. Note that $0 \le 1-f(d_{\xb\yb}) \le 1$, where $d_{\xb\yb} = \Vert \xb - \yb \Vert$  and $p_{\xb\yb} = f(d_{\xb\yb})$.

The state of a device at location $\xb$, $S_\xb$ is a random variable which depends on $\Phi_{\mathrm{M}}$ and $\Phi_{\mathrm{E}}$. The statistical dependence between $S_\xb$, the corresponding rate $R_\xb$ and $\Phi_{\mathrm{M}}$ and $\Phi_{\mathrm{E}}$ is represented as a directed graphical model, as shown in Fig.~\ref{fig:StateEventDependence}. In the figure, the blue box (called a plate, see e.g.~\cite[Ch. 8]{bishop2006pattern}) around $\xb, S_\xb$ and $R_\xb$ indicates $\vert \Phi_{\mathrm{M}} \vert$ instances of each of them, where $\vert \Phi_{\mathrm{M}} \vert$ is the cardinality of $\Phi_{\mathrm{M}}$. Similarly, the right plate around $\yb$ indicates an infinite but countable number of instances of $\yb$.
%
%In the figure, $\xb_i, S_i$ and $R_i$, $i=1,\ldots,N$ are represented by the left plate (left blue box). Similarly, $\yb_n, n=1,2,\ldots$ are represented by the right plate.
%
%In this figure, $\Phi_{\mathrm{M}}$ influences the positions of the $N$ devices. Each state $S_\xb$ also depends on all event positions $\yb$, which in turn depend on $\Phi_{\mathrm{E}}$. 
Note that $S_\xb$ depends on an infinite number of parameters, since $\Phi_{\mathrm{E}}$ contains an infinite number of points almost surely. However, $S_\xb$ is still well-defined, as long as the ATPF is chosen according to the definition and $R < \infty$.

Since each device at location $\xb$ has its own probability $p_\xb$ of going into alarm mode, the present model can be considered a source model. Also, the model allows for correlated behaviour, since for two devices at locations $\xb$ and $\xb'$ respectively, where $\xb \neq \xb'$, both $S_\xb$ and $S_{\xb'}$ depend on all points of $\Phi_{\mathrm{E}}$.% of the event process.

We consider two ways of modeling the states $S_\xb(k)$, with slot index $k$. Below, we model them as i.i.d. Bernoulli random variables, later in Sec.~\ref{sec:MarkovchainAnalysis} we model them as a Markov chain. For the former case,% we define each $S_\xb(k)$ as
\begin{equation}\label{eqn:BernoulliModelCases}
S_\xb(k) = \begin{cases} \text{Regular} & \text{with prob.}\; 1-p_\xb \\ \text{Alarm} & \text{with prob.} \;p_\xb \end{cases},
\end{equation}
with $p_\xb$ given in~\eqref{eqn:FinalExpressionPi}. Then, each realization of $\Phi_{\mathrm{E}}$ and $\Phi_{\mathrm{M}}$ results in state sequences $\{S_\xb(k)\}_{k=1}^\infty$, which are i.i.d. Bernoulli processes.
\def\svgwidth{0.55\columnwidth}
\begin{figure}[t]
	\centering
		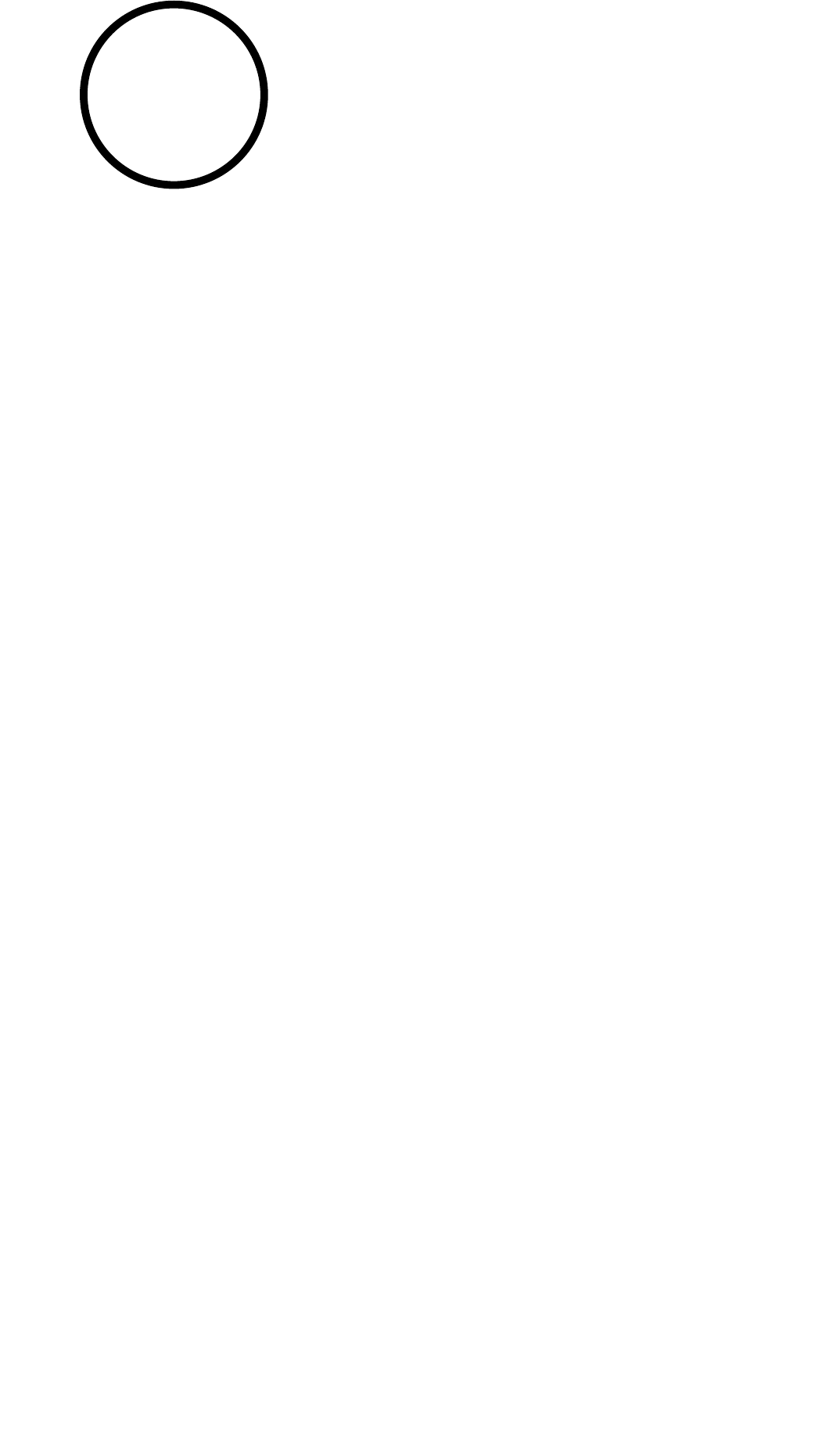
	\caption{Dependence of the rate of the devices on the two PPPs, illustrated as a directed graphical model.}
	\label{fig:StateEventDependence}
\end{figure}

\section{Derivation of the Expected Total Rate}\label{sec:ExpectedAggegrateRateDerivation}

\subsection{Bernoulli Traffic Generation}\label{sub:BernoulliTrafficGeneration}
In this subsection, we assume that each device generates traffic according to a Bernoulli process.

The expected rate of the device at location $\xb$ is%, dependent on $\Phi_{\mathrm{E}}$ and $\Phi_{\mathrm{M}}$ respectively, is
\begin{align}
R_\xb &= p_\xb R_{\mathrm{A}} + (1-p_\xb)R_{\mathrm{R}}.
\end{align}
By substituting the expression of $p_\xb$ in~\eqref{eqn:FinalExpressionPi}, we get
\begin{align}
R_\xb &= p_\xb R_{\mathrm{A}} + (1-p_\xb)R_{\mathrm{R}} \\
&= \left( 1 - \prod_{\yb \in \Phi_{\mathrm{E}}} (1 - p_{\xb\yb}) \right) R_{\mathrm{A}} + \prod_{\yb \in \Phi_{\mathrm{E}}} (1 - p_{\xb\yb}) R_{\mathrm{R}} \\
&= R_{\mathrm{A}} + ( R_{\mathrm{R}} - R_{\mathrm{A}} ) \prod_{\yb \in \Phi_{\mathrm{E}}} (1 - p_{\xb\yb}).
\end{align}

The total rate $R_{\mathrm{T}} = \sum_{\xb \in \Phi_{\mathrm{M}}} R_\xb$ of all devices is then
\begin{equation}\label{eqn:totalRateNotExpectation}
\sum_{\xb \in \Phi_{\mathrm{M}}} \left( R_{\mathrm{A}} + ( R_{\mathrm{R}} - R_{\mathrm{A}} ) \prod_{\yb \in \Phi_{\mathrm{E}}} (1 - p_{\xb\yb}) \right).
\end{equation}
%\begin{equation}\label{eqn:totalRateNotExpectation}
%R_{\mathrm{T}} = \sum_{i \in \varphi_{\mathrm{M}}} R_i = \sum_{i \in \varphi_{\mathrm{M}}} \left( R_{\mathrm{A}} + ( R_{\mathrm{R}} - R_{\mathrm{A}} ) \prod_{j \in \varphi_{\mathrm{E}}} (1 - p_{ij}) \right).
%\end{equation}
The goal is now to obtain an expression for the total rate averaged over $\Phi_{\mathrm{M}}$ and $\Phi_{\mathrm{E}}$. This expression is given in the following theorem:

\begin{thm*}\label{thm:totalRateAnalytical}
Let $\Phi_{\mathrm{M}}$ and $\Phi_{\mathrm{E}}$ be independent homogeneous PPPs with densities $\lambda_{\mathrm{M}}$ and $\lambda_{\mathrm{E}}$ respectively, and $f$ be a valid ATPF. Let $C$ be the circular coverage disk of the BS, and $R$ its radius. Then the expected total rate averaged over $\Phi_{\mathrm{M}}$ and $\Phi_{\mathrm{E}}$ is
\begin{align}\label{eqn:totalRateAnalytical}
\overline{R_{\mathrm{T}}} &\triangleq \E \left[R_{\mathrm{T}} \right] =  \lambda_{\mathrm{M}} \pi R^2 \cdot \nonumber \\ 
& \left( R_{\mathrm{A}} + (R_{\mathrm{R}} - R_{\mathrm{A}})  \int_C \exp \left( -2\pi\lambda_{\mathrm{E}} \int_{0}^\infty f(r)r \mathrm{d}r \right) \right).
\end{align}
\end{thm*}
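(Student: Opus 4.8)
The plan is to start from the explicit total rate in~\eqref{eqn:totalRateNotExpectation} and average it over the two independent processes by peeling off the expectations one at a time. Since $\Phi_{\mathrm{M}}$ and $\Phi_{\mathrm{E}}$ are independent, I would first condition on $\Phi_{\mathrm{M}}$ and take the expectation over $\Phi_{\mathrm{E}}$ inside the sum, and then take the remaining expectation over $\Phi_{\mathrm{M}}$. The two workhorses are the probability generating functional (PGFL) of a PPP, applied to the product $\prod_{\yb \in \Phi_{\mathrm{E}}}(1-p_{\xb\yb})$, and Campbell's averaging formula, applied to the sum $\sum_{\xb \in \Phi_{\mathrm{M}}}(\cdots)$ over the devices.

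For the inner average I would apply the PGFL with test function $v(\yb) = 1 - p_{\xb\yb} = 1 - f(\Vert\xb-\yb\Vert)$. For a homogeneous PPP of density $\lambda_{\mathrm{E}}$ on $\Rt$ this yields
\begin{equation}
\E_{\Phi_{\mathrm{E}}}\!\left[\prod_{\yb \in \Phi_{\mathrm{E}}}(1-p_{\xb\yb})\right] = \exp\!\left(-\lambda_{\mathrm{E}}\int_{\Rt} f(\Vert\xb-\yb\Vert)\,\mathrm{d}\yb\right).
\end{equation}
The next step is the substitution $\yb \mapsto \yb - \xb$ followed by polar coordinates $(r,\theta)$, which turns $\int_{\Rt} f(\Vert\xb-\yb\Vert)\,\mathrm{d}\yb$ into $2\pi\int_0^\infty f(r)\,r\,\mathrm{d}r$. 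Crucially, because $\Phi_{\mathrm{E}}$ lives on the whole plane, this integral is translation invariant and hence independent of the device location $\xb$; this is exactly what collapses the spatial dependence and produces the position-free exponential appearing in~\eqref{eqn:totalRateAnalytical}.

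With the inner expectation evaluated, the summand of the outer sum is the constant $g(\xb) \equiv R_{\mathrm{A}} + (R_{\mathrm{R}}-R_{\mathrm{A}})\exp(-2\pi\lambda_{\mathrm{E}}\int_0^\infty f(r)\,r\,\mathrm{d}r)$. Applying Campbell's formula to $\Phi_{\mathrm{M}}$ restricted to the disk $C$ gives $\E_{\Phi_{\mathrm{M}}}[\sum_{\xb \in \Phi_{\mathrm{M}}} g(\xb)] = \lambda_{\mathrm{M}}\int_C g(\xb)\,\mathrm{d}\xb$, and since $g$ is constant in $\xb$ the spatial integral reduces to $g$ times the area $\pi R^2$, recovering the prefactor $\lambda_{\mathrm{M}}\pi R^2$ in the claim.

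The main obstacle is analytic rather than algebraic: justifying that all these interchanges are legitimate and that the PGFL yields a finite exponent even though $f$ need not be compactly supported. This is precisely where the finite-first-moment requirement in the definition of an ATPF enters — it guarantees $\int_0^\infty f(r)\,r\,\mathrm{d}r < \infty$, so the exponent is finite (and the exponential is strictly positive). The bounds $0 \le f \le 1$ together with the finiteness of the rates then let me invoke Fubini/Tonelli to swap the $\Phi_{\mathrm{E}}$-expectation with the sum over $\Phi_{\mathrm{M}}$ and to split the overall expectation across the two independent processes. Once this integrability is secured, the remaining computation is routine.
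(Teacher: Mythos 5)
Your proposal follows essentially the same route as the paper's proof: the probability generating functional of $\Phi_{\mathrm{E}}$ applied to $1-f(\Vert\xb-\yb\Vert)$ to collapse the inner product into the position-independent exponential, followed by Campbell's theorem over $\Phi_{\mathrm{M}}$ restricted to $C$ to produce the $\lambda_{\mathrm{M}}\pi R^2$ prefactor. Your added attention to why the exponent is finite (the finite-first-moment condition on the ATPF) and to the legitimacy of the interchanges is a welcome refinement that the paper leaves implicit, but it does not change the argument.
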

\begin{proof}
Taking the expectation over $\Phi_{\mathrm{M}}$ and $\Phi_{\mathrm{E}}$ in~\eqref{eqn:totalRateNotExpectation}, we get 
\begin{align}
&\E \left[ \sum_{\xb \in \Phi_{\mathrm{M}}} \left( R_{\mathrm{A}} + ( R_{\mathrm{R}} - R_{\mathrm{A}} ) \prod_{\yb \in \Phi_{\mathrm{E}}} (1 - p_{\xb\yb}) \right) \right] \\ \nonumber
&= \E \left[ \sum_{\xb \in \Phi_{\mathrm{M}}} R_{\mathrm{A}} \right] + \E \left[ \sum_{\xb \in \Phi_{\mathrm{M}}} (R_{\mathrm{R}} - R_{\mathrm{A}}) \prod_{\yb \in \Phi_{\mathrm{E}}}(1-p_{\xb\yb}) \right] \\ 
&= \lambda_{\mathrm{M}} \int_C R_{\mathrm{A}} + (R_{\mathrm{R}} - R_{\mathrm{A}}) \E \left[ \sum_{\xb \in \Phi_{\mathrm{M}}} \prod_{\yb \in \Phi_{\mathrm{E}}}(1-p_{\xb\yb}) \right] \label{eqn:CampbellThm1} \\ \nonumber
&= \lambda_{\mathrm{M}} \pi R^2 R_{\mathrm{A}} \\ & \hspace{1cm}+ (R_{\mathrm{R}} - R_{\mathrm{A}}) \E \left[ \sum_{\xb \in \Phi_{\mathrm{M}}} \E \left[ \prod_{\yb \in \Phi_{\mathrm{E}}} (1-p_{\xb\yb}) \right] \right] \label{eqn:LastTermInnerExpectation},
\end{align}
where in~\eqref{eqn:CampbellThm1} Campbell's Theorem~\cite[Thm. 4.1]{chiu2013stochastic} was used for a PPP with density $\lambda_{\mathrm{M}}$ over a circular disk $C$ of radius $R$.

To compute the inner expectation in the second term of~\eqref{eqn:LastTermInnerExpectation}, we use the probability generating functional (PGFL) of a PPP with density $\lambda_{\mathrm{E}}$~\cite[p.125]{chiu2013stochastic}. After changing to polar coordinates, we get for each summand in~\eqref{eqn:LastTermInnerExpectation}
\begin{equation}
\E \left[ \prod_{\yb \in \Phi_{\mathrm{E}}} (1 - f(d_{\xb\yb})) \right] = \exp \left( -2\pi \lambda_{\mathrm{E}} \int_0^\infty f(r) r \mathrm{d}r \right).
\end{equation}
%Taking the inner expectation in~\eqref{eqn:LastTermInnerExpectation}, we obtain
%Taking the inner expectation of the sum in~\eqref{eqn:LastTermInnerExpectation}, we obtain
Then, the outer expectation in~\eqref{eqn:LastTermInnerExpectation} equals
\begin{align}
%& \E_{\Phi_{\mathrm{M}}} \left[ \sum_{i \in \varphi_{\mathrm{M}}} \E_{\Phi_{\mathrm{E}}} \left[ \prod_{j \in \varphi_{\mathrm{E}}} (1 - p_{ij}) \right] \right]\\ & = 
\E \left[ \sum_{\xb \in \Phi_{\mathrm{M}}} \exp \left( -2\pi \lambda_{\mathrm{E}} \int_0^\infty f(r) r \mathrm{d}r \right) \right]. \label{eqn:secondEquationCampbell}
\end{align}
Using Campbell's Theorem, the expectation in~\eqref{eqn:secondEquationCampbell} can be expressed as
\begin{align}
&\lambda_{\mathrm{M}} \int_C \exp \left( -2\pi \lambda_{\mathrm{E}} \int_0^\infty f(r) r \mathrm{d}r \right) \\ &= \lambda_{\mathrm{M}} \pi R^2 \exp \left( -2\pi \lambda_{\mathrm{E}} \int_0^\infty f(r) r \mathrm{d}r \right).
\end{align}
Combining the above results and rearranging yields
\begin{align}
\lambda_{\mathrm{M}} \pi R^2 \left( R_{\mathrm{A}} + (R_{\mathrm{R}} - R_{\mathrm{A}})\exp \left( -2\pi \lambda_{\mathrm{E}} \int_0^\infty f(r) r \mathrm{d}r \right) \right). \label{eqn:FinalExpressiontotalRate}
\end{align}
%\begin{align}
%\overline{R_{\mathrm{T}}} &= \lambda_{\mathrm{M}} \pi R^2 \cdot \nonumber \\ & \hspace{0.5cm}\left( R_{\mathrm{A}} + (R_{\mathrm{R}} - R_{\mathrm{A}})\exp \left( -2\pi \lambda_{\mathrm{E}} \int_0^\infty f(r) r \mathrm{d}r \right) \right). \label{eqn:FinalExpressiontotalRate}
%\end{align}
\end{proof}

\begin{ex*}[Exponential ATPF]\label{ex:expATPF}
Consider an exponentially decaying function of the distance $d$:%from $\xb$ to $\yb$:
\begin{equation}\label{eqn:ExpATPF}
f(d) = \exp(-d).
\end{equation}
It is easy to see that~\eqref{eqn:ExpATPF} satisfies the conditions in the Definition of an ATPF. In this case, $p_{\xb\yb} = \exp ( -d_{\xb\yb})$. The expected total rate is
\begin{equation}\label{eqn:rateExpATPF}
\overline{R_{\mathrm{T}}} = \lambda_{\mathrm{M}} \pi R^2 (R_{\mathrm{A}} + (R_{\mathrm{R}} - R_{\mathrm{A}}) \exp( -2\pi \lambda_{\mathrm{E}} )),
\end{equation}
which follows since $\int_0^\infty \exp(-r) r \mathrm{d} r = 1$.
Using~\eqref{eqn:rateExpATPF}, we can see that $\lambda_{\mathrm{E}} \to 0$ (i.e. no event) implies $\overline{R_{\mathrm{T}}} = \lambda_{\mathrm{M}} \pi R^2 R_{\mathrm{R}}$ (only regular traffic), while $\lambda_{\mathrm{E}} \to \infty$ implies $\overline{R_{\mathrm{T}}} = \lambda_{\mathrm{M}}\pi R^2 R_{\mathrm{A}}$ (only alarm traffic), as expected.
\end{ex*}

\section{Markov chain Analysis}\label{sec:MarkovchainAnalysis}
In this section, we extend the model, enabling each device to stay in alarm mode for a duration geometrically distributed with parameter $q$.

We characterize the state of a device at location $\xb \in \Phi_{\mathrm{M}}$ at a given time $k$ as a Markov process, i.e.
\begin{equation}
\Pr \{S_\xb(k) \mid S_\xb(k-1), \ldots, S_\xb(0) \} = \Pr \{ S_\xb(k)\mid S_\xb(k-1) \}
\end{equation}
where $S_\xb(k)$ can take two different values, Regular and Alarm. The state transition diagram is shown in Fig.~\ref{fig:StateTransitionDiagram}, and the state transition matrix is
\begin{equation}\label{eqn:StateTransitionMatrixSourceModel}
\mathbf{P}_\xb = \begin{bmatrix} 1-p_\xb& p_\xb \\ 1-q & q\end{bmatrix}.
\end{equation}
This Markov chain is ergodic; it has a unique steady-state probability vector $\pi_\xb = [\pi_{\xb,A}, \; \pi_{\xb,R}]$, where $\pi_{\xb,A}$ ($\pi_{\xb,R}$) is the probability of alarm (regular) state. The vector $\pi_\xb$ can be found by solving the system of linear equations
\begin{equation}
	\pi_\xb \mathbf{P}_\xb = \pi_\xb,
\end{equation}
together with the normalizing condition $\pi_{\xb,A} + \pi_{\xb,R} = 1$:
\begin{equation}\label{eqn:SteadyStateProbs}
	\pi_{\xb,A} = \frac{p_\xb}{1+p_\xb-q}, \hspace{1cm} \pi_{\xb,R} = \frac{1-q}{1+p_\xb-q}.
\end{equation}

We assume that the initial state $S_\xb(0)$ is distributed according to the above steady-state distribution.% of the Markov chain, see~\eqref{eqn:SteadyStateProbs}.
\def\svgwidth{0.85\columnwidth}
\begin{figure}[t]
	\centering
		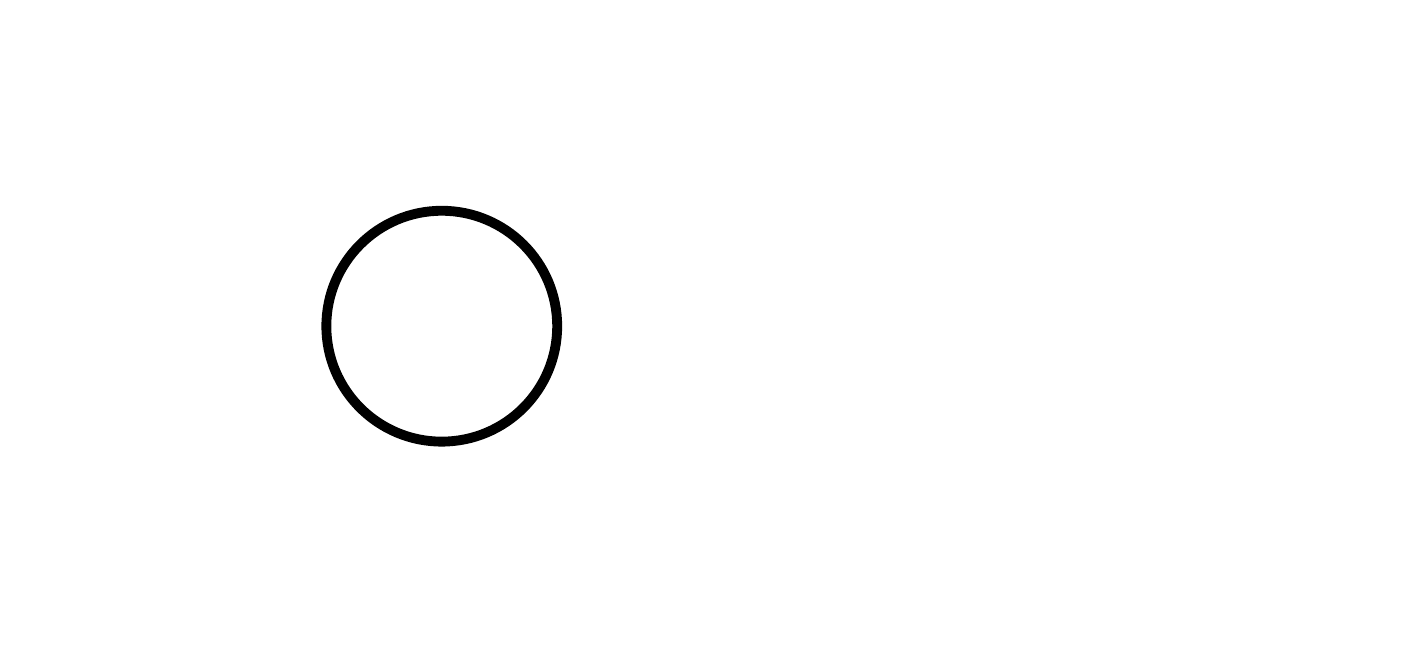
	\caption{State transition diagram of the Markov chain model describing the temporal behaviour of the $i$th device.}
	\label{fig:StateTransitionDiagram}
\end{figure}
%
%Let $\pi_{i} = [\pi_{i,A}, \; \pi_{i,R}]$ be the state probability vector of device $i$, where $\pi_{i,A}$ is the probability of alarm state and $\pi_{i,R}$ is the probability of regular state.
%Since the Markov chain is ergodic, it has a unique steady state distribution that 

Note that by setting $q = p_\xb$, we recover the model defined in~\eqref{eqn:BernoulliModelCases}. Also, by replacing each $p_\xb$ in~\eqref{eqn:SteadyStateProbs} with $\overline{p} \triangleq \E \left[  p_\xb \right]$, we can approximate the total rate $R(k) = \sum_{\xb \in \Phi_{\mathrm{M}}}R_\xb(k)$ of the Markov chain model in a low complexity manner:
\begin{align}
	\widetilde{R_{\mathrm{T}}} &= \E \left[ \sum_{\xb \in \Phi_{\mathrm{M}}} \left( \frac{\overline{p}}{1+\overline{p}-q} R_{\mathrm{A}} + \frac{1-q}{1+\overline{p}-q} R_{\mathrm{R}} \right) \right] \\
	&= \lambda_{\mathrm{M}} \pi R^2 \left( \frac{\overline{p}}{1+\overline{p}-q} R_{\mathrm{A}} + \frac{1-q}{1+\overline{p}-q} R_{\mathrm{R}} \right).\label{eqn:totalRateApproxMarkovchain}
\end{align}
We comment on this approximation in Sec.~\ref{sec:NumericalResults}. Note that by setting $q=0$ and multiplying $p_\xb$ by a factor $\theta(t)$ in the Markov chain, we recover the models of~\cite{laner2013traffic, centenaro2015study}. 

\subsection{Temporal Characteristics of the Model}\label{sub:TemporalCorrelationoftheModel}
We study the temporal characteristics of the Markov chain model through the autocovariance function (ACF) of the total rate.%Here $R(k) = \sum_{\xb \in \Phi_{\mathrm{M}}}R_\xb(k)$ is the total rate at time $k$, for a given realization of $\Phi_{\mathrm{E}}$ and $\Phi_{\mathrm{M}}$.

The reason for introducting the ACF is that we want to see the effect of the parameters $\lambda_{\mathrm{E}}$, $\lambda_{\mathrm{M}}$ and $q$ on the long-term temporal dependence of the model. We suspect that by letting $q \to 1$, the autocovariance function will decay slowly, since devices triggered by an alarm will stay in the alarm mode for a longer period of time, implying temporally correlated traffic. The ACF of $R(k)$ is defined as
\begin{equation}\label{eqn:ACF}
	C_{RR}(k,k') = \frac{1}{\sigma_{\mathrm{R}}^2}\E \left[ (R(k)-\mu_{\mathrm{R}})(R(k')-\mu_{\mathrm{R}}) \right],
\end{equation}
where $k, k' = 1,2,\ldots$ and $\mu_{\mathrm{R}}$ and $\sigma_{\mathrm{R}}^2$ are the mean and variance of $R(k)$ respectively. The expectation is taken over the distribution of $R(k)$.

Note that we use the autocovariance function instead of the autocorrelation, because we want to compare the temporal dependence for various settings of the  parameters $\lambda_{\mathrm{E}}$, $\lambda_{\mathrm{M}}$ and $q$ on the same scale.% Therefore, we normalize the ACF.

We estimate the ACF of $R(k)$ via Monte Carlo (MC) simulations as follows. For any trial, the unbiased sample autocovariance%. It is given as 
\begin{equation}\label{eqn:sampleACF}
	\widehat{C}_{RR}(k) = \frac{1}{(N_S-k)\sigma_{\mathrm{R}}^2}\sum_{i=1}^{N_S-k-1}(R(i)-\mu_{\mathrm{R}})(R(i+k)-\mu_{\mathrm{R}}),
\end{equation}
is computed.
%which is accurate if $k \ll N_S$. 
Here $N_S$ is the number of time samples. Note that in our case, ~\eqref{eqn:sampleACF} is conditioned on realizations of $\Phi_{\mathrm{E}}$ and $\Phi_{\mathrm{M}}$. Also, the mean and variance are replaced by their respective sample estimates. Averaging $\widehat{C}_{RR}(k)$ over the PPP realizations via MC simulations, we get $\overline{C_{RR}}(k)$. % the spatially averaged ACF
%\begin{equation}\label{eqn:SpatiallyAveragedACF}%
%	\overline{C_{RR}}(k) = \E_{\Phi_{\mathrm{E}}, \Phi_{\mathrm{M}}} \left[ \widehat{C}_{RR}(k) \right].
%\end{equation}
In the next section, the role of this quantity on the traffic will be discussed. The derivation of the ACF is left for future work.

\section{Numerical Results}\label{sec:NumericalResults}
In this section, we show some quantitative results illustrating the behavioural features of the Bernoulli and Markov chain models, using MC simulations done in Matlab. In the simulations, devices and events are deployed independently and traffic is generated either using the Bernoulli or Markov chain models. The resulting total rate is then averaged over event and device realizations. We use the exponential ATPF~\eqref{eqn:ExpATPF} in all simulations. The simulation parameters are listed in Tab.~\ref{tab:SimulationParameters}. They are used unless otherwise specified.

\begin{table}[b]
	\caption{Simulation parameters.}
	\centering
			\begin{tabular}{ c l c }
		  	\hline
  			Parameter & Description & Value\\
				\hline
				$s$ & Observation window size & $100$ m \\
				$\lambda_{\mathrm{M}}$ & Device density & $10^{-1}$ ${\text{nodes}}/{\text{m}^2}$ \\
				$\lambda_{\mathrm{E}}$ & Event density & $10,\ldots, 10^{-5}$ ${\text{events}}/{\text{m}^2}$ \\
				$R_{\mathrm{A}}$ & Rate in alarm mode& $1$ $\text{packet}/\text{slot}$ \\
				$R_{\mathrm{R}}$ & Rate in regular mode& $0.01$ $\text{packet}/\text{slot}$ \\
				$R$ & Cell radius & $20$ m \\
				\hline
			\end{tabular}
	\label{tab:SimulationParameters}
\end{table}

\subsection{Bernoulli Process Model}\label{sub:BernoulliProcessModel}
Fig.~\ref{fig:rateAnaSimBern} shows a comparison between simulation and analytical results for the Bernoulli model. We see that the theoretical and simulation curves match, which validates our model. Also, we see that for $\lambda_{\mathrm{E}} \to 0$ only regular traffic occurs and $\lambda_{\mathrm{E}} \to \infty$ implies that only alarm traffic is received at the BS. Three regions are identified, two saturation regions (when $\lambda_{\mathrm{E}} \to 0$ and $\lambda_{\mathrm{E}} \to \infty$) and a transition region.

\begin{figure}[t]
	\centering
		\includegraphics[width=0.75\linewidth]{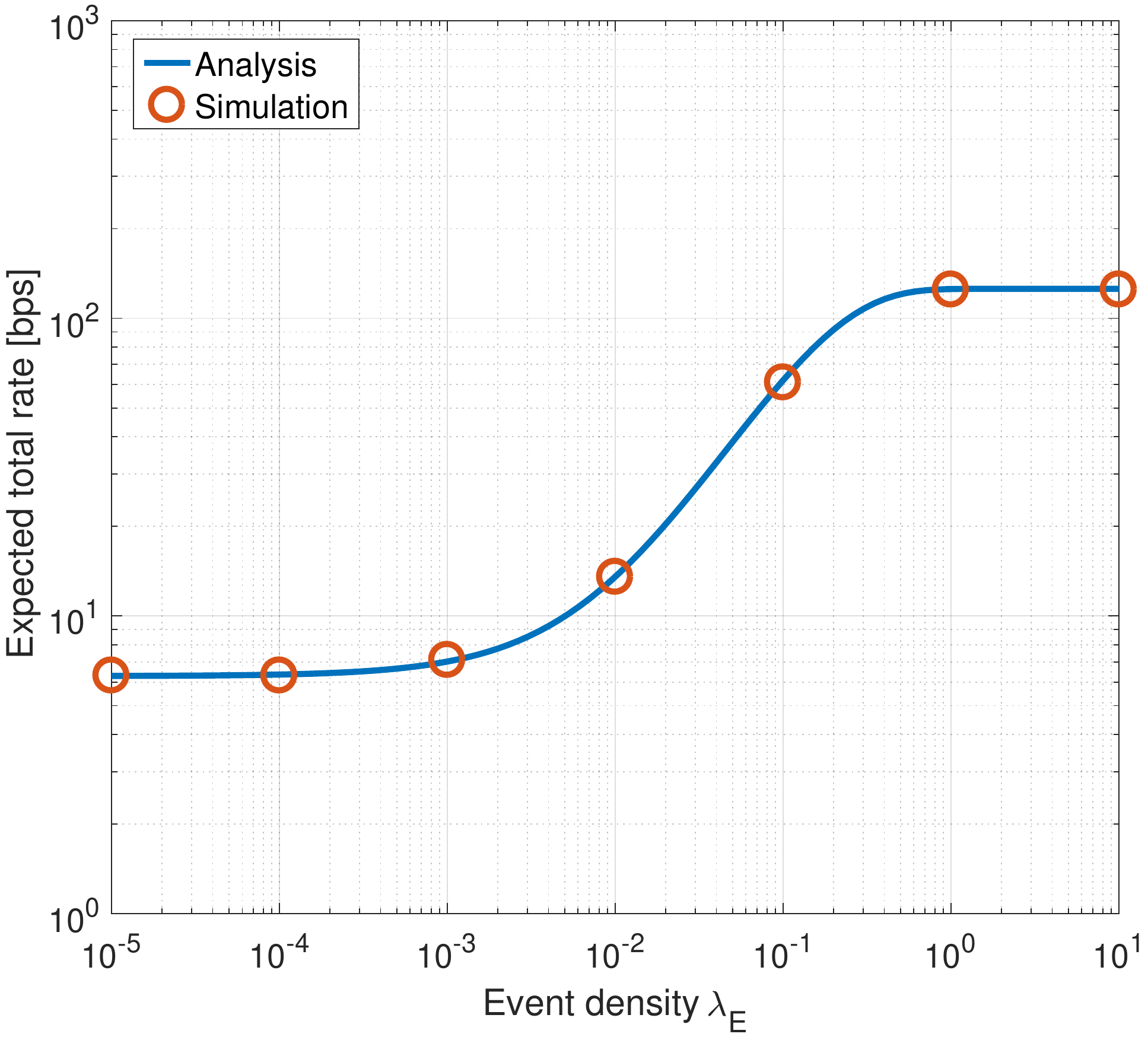}
	\caption{Expected total rate for Bernoulli traffic.}
	\label{fig:rateAnaSimBern}
\end{figure}

\subsection{Markov Chain Model}\label{sub:MarkovchainModel}
The Markov chain model in Sec.~\ref{sec:MarkovchainAnalysis} is numerically evaluated in this subsection. We compare the total rate computed from the simulations against the approximation $\widetilde{R_{\mathrm{T}}}$ in~\eqref{eqn:totalRateApproxMarkovchain} and the analytical expression for $\overline{R_{\mathrm{T}}}$ in~\eqref{eqn:totalRateAnalytical}. We chose to include $\overline{R_{\mathrm{T}}}$ since we want to compare the two expressions and identify the numerical values of $q$ for which they are close.

In Fig.~\ref{fig:rateAnaSimBernMCvaryq} the expected total rate of the Markov chain model is shown for two values of $\lambda_E$. It can be seen that for the Bernoulli model, the rate is constant, since it does not depend on $q$. As for the Markov chain model, the rate obtained from the simulation is always upper bounded by the steady-state approximation, where the latter is obtained from~\eqref{eqn:totalRateApproxMarkovchain}. Further, these two curves have the same shape. The approximation can be used as an upper bound on the total rate at the BS; however the bound is not tight.

\begin{figure}[t]
	\centering
		\includegraphics[width=0.7\linewidth]{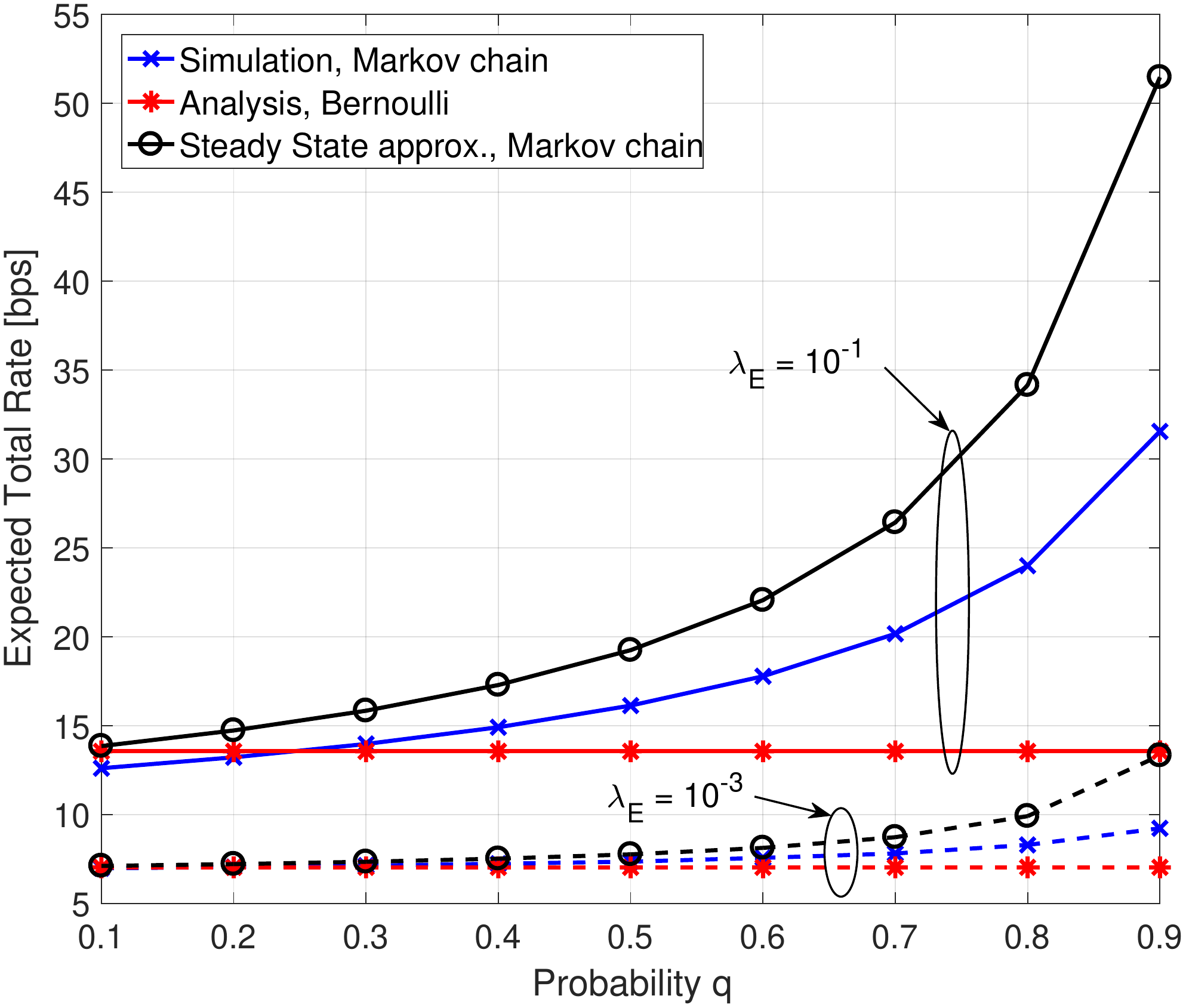}
	\caption{Expected total rates for Bernoulli traffic and for the Markov chain model; $\lambda_{\mathrm{M}} = 10^{-1}$.}
	\label{fig:rateAnaSimBernMCvaryq}
\end{figure}

\begin{figure}[t]
	\centering
		\includegraphics[width=0.7\linewidth]{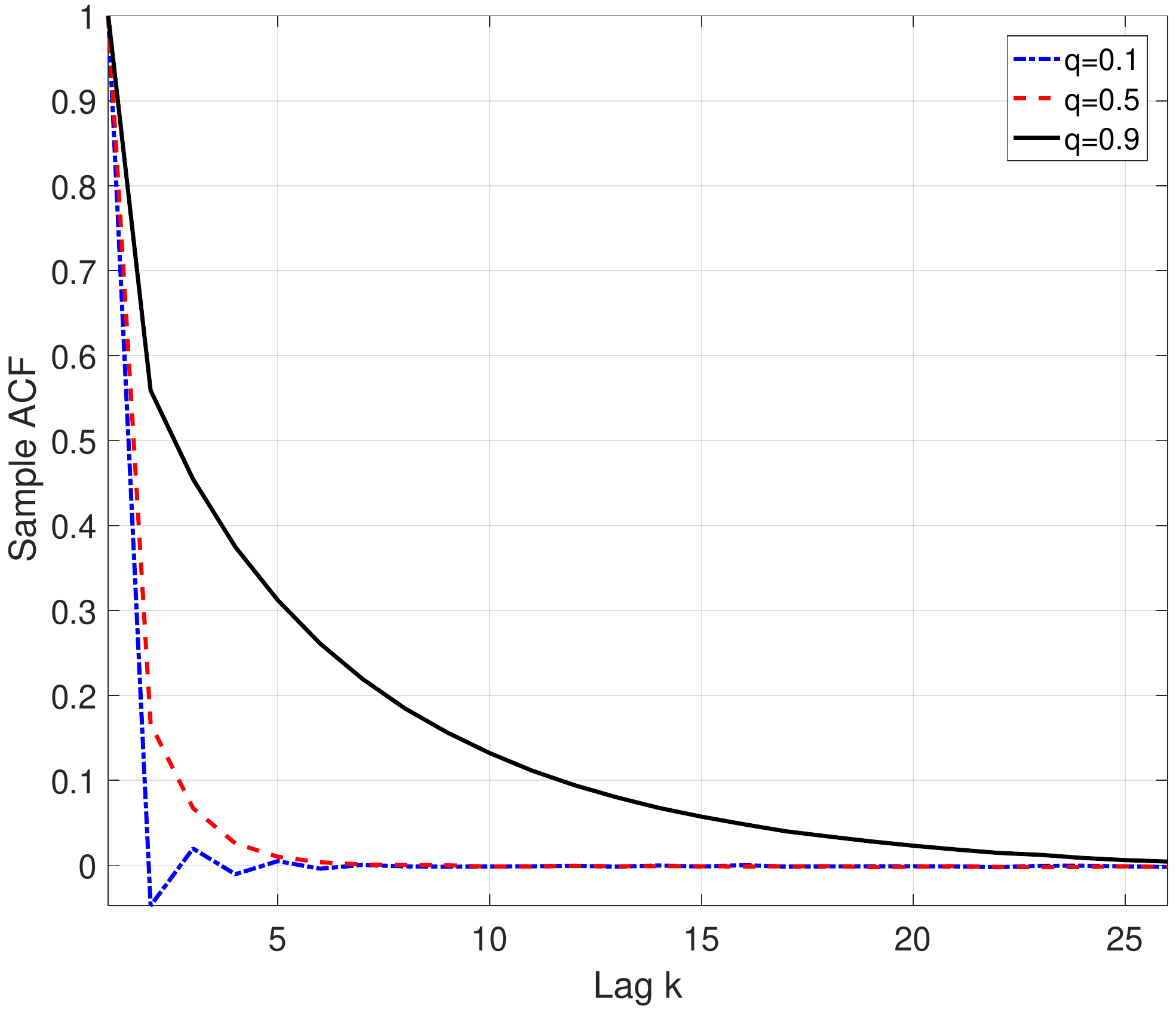}
	\caption{Sample ACF of the total rate for the Markov chain model; $\lambda_{\mathrm{E}} = 10^{-2}$, $\lambda_{\mathrm{M}} = 10^{-2}$.}
	\label{fig:autocovarianceE001M001}
\end{figure}

The sample autocovariance $\overline{C_{RR}}(k)$ is shown in Fig.~\ref{fig:autocovarianceE001M001}. Focusing first on the case $q=0.1$, we see that the ACF almost behaves like the one from an i.i.d. Bernoulli process, in that it has short memory. This is because the $p_\xb$s are low (through the relatively low densities of the PPPs) and because of the relatively small value of $q$. Increasing the value of $q$, see the dashed red (for $q=0.5$) and solid black (for $q=0.9$) curves, increases the memory, i.e. the total rate at a given time $k$ is correlated with many past values. This is because the Markov chain now stays in the alarm state longer when it enters that state.

Hence, the introduction of the additional parameter $q$ in the Markov chain model allows for tuning the temporal correlation of the individual rate processes of the MTDs and, as a result, that of the total rate process.

\section{Conclusion and Future Work}\label{sec:Conclusion}
In this paper, a MTC traffic model based on spatial point processes was given. We also provided a verification of the model via simulations and derived an expression for the total rate of the devices. This expression is very fast to compute, compared to a full simulation, especially for high device and event densities. A Markov chain model was also defined. It was seen that approximating the probability $p_\xb$ by $\overline{p}$ for all devices, gives a good approximation of the total rate for low values of $q$.
The parameter $q$ of the Markov chain model enables tuning of the model to suit various MTC applications and alarm reporting strategies. The models can be used to assess network performance in terms of the total rate and to see the impact of the event density. Furthermore, the Markov chain model, along with the tunable parameter $q$, can be used to study the impact of events on the temporal correlation of the total traffic at the BS.
Also, even though our model is a source model, the expected total rate is still fast to compute, thanks to the machinery of PPPs. As future work, we plan to include time-dependency in the event modeling, as well as to consider several types of MTDs and events. Also, we plan to model other types of sources, such as Poisson and periodic sources, and study the variance of the total traffic, along with its impact on network performance.

\section{Acknowledgements}\label{sec:Acknowledgements}
This work has been supported by the cooperative project VIRTUOSO, partially funded by Innovation Fund Denmark.
%This work has been supported by the Danish High Technology Foundation via the Virtuoso project.

\bibliographystyle{ieeetr}
\bibliography{biblio}

\end{document}